\newcommand{\Fq}{ \mathbb{F}_{q}}
\newcommand{\y}{\boldsymbol{y}}
\newcommand{\f}{\boldsymbol{f}}
\newcommand{\bb}{\boldsymbol{b}}
\newcommand{\e}{\boldsymbol{e}}
\newcommand{\bphi}{\boldsymbol{\varphi}}
\newcommand{\rank}{\text{rank}}
\newcommand{\df}{d_f}
\newcommand{\dg}{d_g}
\newcommand{\dfge}{d_{fgE}}
\newcommand{\deltafgE}{\delta_{fgE}}
\newcommand{\dA}{d_A}
\newcommand{\db}{d_b}
\newcommand{\alphab}{\boldsymbol{\alpha}}
\newtheorem{theorem}{Theorem}
\newtheorem{definition}[theorem]{Definition}
\newtheorem{remark}[theorem]{Remark}
\def\BibTeX{{\rm B\kern-.05em{\sc i\kern-.025em b}\kern-.08em
    T\kern-.1667em\lower.7ex\hbox{E}\kern-.125emX}}
\begin{document}

\setlength{\abovedisplayskip}{3pt}
\setlength{\belowdisplayskip}{3pt}

\title{Enhancing simultaneous rational function recovery: adaptive error correction capability 
and new bounds for applications
}

\author{\IEEEauthorblockN{Eleonora Guerrini, Romain Lebreton, Ilaria Zappatore}
\IEEEauthorblockA{
\textit{LIRMM, Universit\'e de Montpellier, CNRS}\\
Montpellier, France \\
\textbf{\{guerrini, lebreton, zappatore\}@lirmm.fr}}}

\maketitle

\begin{abstract}
In this work we present some results that allow to improve the decoding radius in solving polynomial linear systems with errors in the scenario where errors are additive and randomly distributed over a finite field.  The decoding radius depends on some bounds on the solution that we want to recover, so their overestimation could significantly decrease our error correction capability. For this reason, we introduce an algorithm that can bridge this gap, introducing some {\it ad hoc} parameters that reduce the discrepancy between the estimate decoding radius and the effective error correction capability.

\end{abstract}
\section{Introduction}
The family of Reed Solomon codes (RS)  is a large class of very well studied algebraic codes. 
They are MDS codes, they perform list-decoding   and have efficient decoding algorithms that can be viewed in a computer algebra setting as  rational reconstruction problems.  
More specifically, we are interested on Interleaving Reed Solomon (IRS) codes. 
IRS codes are well studied and can be decoded efficiently by a bounded distance (BD) decoder beyond the unique correction capability radius for almost all error patterns (\emph{cf.}~\cite{bleichenbacher_decoding_2003},\cite{schmidt_collaborative_2009},\cite{puchinger_decoding_2017}). 

In this work we focus on the problem of solving a polynomial linear system with errors (PLSwE), introduced in \cite{boyer_numerical_2014} and \cite{kaltofen_early_2017}. 
Since the solution of PLSwE   is a vector of rational function, the PLSwE is a special case of the problem of reconstructing a vector of rational function given its evaluations, some of which could be erroneous (the simultaneous rational function recovery, shortly SRFR).
In \cite{guerrini_polynomial_2019} we proposed an algorithm, based on IRS decoding, that allows to solve SRFR (and in particular PLSwE), correcting more than \cite{boyer_numerical_2014} and \cite{kaltofen_early_2017} in a probabilistic setting.
In this paper, we improve the technique of \cite{guerrini_polynomial_2019}, increasing the error correction capability.
Since we want to recover a vector of rational functions $\y$, which is solution of a polynomial linear system over a finite field, $A(x) \y(x) = \bb(x)$, we introduce a new bound on the error correction capability which also depends on the bounds on the degrees of $A$ and $\bb$.
Moreover,  the knowledge of the degrees of the solution would allows us to reach an \textit{ideal} error correction capability, but we do not know these degrees and their overestimation could significantly decrease the amount of errors we could correct. 
For this reason, we introduce a \textit{parameter oblivious} algorithm for the PLSwE that allows us to get closer to the ideal error correction capability, without knowing the real degrees. 

The paper is structured as follows: in Section~\ref{IRS} we recall standard facts for IRS codes, in Section~\ref{polyLinSystErrSection} we introduce our problem (PLSwE) and we set up the model. In Section~\ref{SR} we introduce the generalization of the PLSwE, \textit{i.e} the simultaneous rational function recovery (SRFR). We present a technique, based on IRS decoding, that allows to achieve a bigger error correction capability. In Section~\ref{ET}, we propose our algorithm and in Section~\ref{technicalResultsSec}, we present our main theorem, the cornerstone of all our technical results. Finally in Section~\ref{ConclusionSec}, we expose our open problems and conclusions.

\section{Interleaved Reed-Solomon codes}
\label{IRS}

	A Reed-Solomon (RS) code of length $n$ and dimension $k$ over $\Fq$ can be defined as the set 
	$
	\mathcal{C}_{RS}(n,k)=\{(f(\alpha_1), \ldots, f(\alpha_n)) \mid f\in \Fq[x], \deg(f)\leq k-1\}
	$
	where $\alphab:=\{\alpha_1,\ldots,\alpha_n\}$ is the set of distinct evaluation points over $\Fq$.
 For $l \geq 1 $, 
	an $l$-Interleaved Reed-Solomon (IRS) code is defined by
	the direct sum of $l$ RS codes $\mathcal{C}_{RS}(n, k_i)$ sharing the same set of evaluation points, \textit{i.e.}
	$$
	\mathcal{C}_{IRS}(n,\boldsymbol{k})=\left \{
(\boldsymbol{c}_i)_{1 \leq i \leq l}
	\in (\Fq)^{l\times n}\mid \boldsymbol{c}_i \in \mathcal{C}_{RS}(n,k_i)\right\} 
	$$
	If $k=k_1=\ldots=k_l$, we say that an IRS is homogeneous and we denote it $\mathcal{C}_{IRS}(n,k;l)$.
From now, we will focus only on homogeneous IRS codes. 
Codewords in $\mathcal{C}_{IRS}(n,k;l)$ can be seen as evaluations of  $\f(x)=(f_1,\ldots,f_l)\in (\Fq[x])^{l\times1}$ with $\deg(\f):=\max_{1\leq i \leq l}(\deg(f_i))\leq k-1$.

We now consider the decoding instance
$
\Upsilon=C+\Xi \in (\Fq)^{l\times n}
$
where $C \in \mathcal{C}_{IRS}(n,k;l)$ and $\Xi$ is the error matrix. We can see $\Upsilon=(\y_j)_{1\leq j\leq n}$ and $\Xi=(\e_j)_{1\leq j \leq n}$ both in $(\Fq^l)^n$. As error model we consider \textit{burst errors}, \textit{i.e.} the error positions are the nonzero columns of the error matrix $\Xi$. In detail, for any $1\leq j\leq n$, the set of error positions is $E:=\{1\leq j\leq n \mid \e_j\neq \boldsymbol{0}\}$. The number of errors is then $|E|$.

Since $C=(\f(\alpha_1), \ldots, \f(\alpha_n))$ for $\f\in (\Fq[x])^{l\times 1}$, $\deg(\f)\leq k-1$, for any $1\leq j\leq n$,
\begin{equation}{\label{IRSeq}}
\y_j=\f(\alpha_j)+\e_j.
\end{equation}
In order to decode $C$, we need to recover the vector of polynomials $\f(x)$.

For  $1\leq i \leq l$, let $Y_i\in \Fq[x]$ be the \textit{Lagrangian polynomials} 
such that $Y_i(\alpha_j)=y_{ij}$, for $1\leq j\leq n$, and $\deg(Y_i)<n$ and let  $\Lambda=\prod_{j\in E}(x-\alpha_j)$ be the \textit{error locator polynomial}. We observe that for any $1\leq i\leq l$,
$\Lambda Y_i \equiv \Lambda f_i \mod \prod_{j=1}^n(x-\alpha_j)$.
This is a nonlinear equation in the unknowns $\Lambda(x)$ and $\f(x)$. A classic approach for decoding RS codes (\emph{cf.}~\cite{gao_new_2003}, \cite{elwyn_r._berlekamp_error_1986}), that  can be extended to IRS codes, consists in the \textit{linearization} of these equations, by replacing $\Lambda(x)$ and $\Lambda(x) f_i(x) $ with the unknowns $\lambda(x)$ and $\varphi_i(x)$. In this way we obtain the \textit{key equation}
\begin{equation}{\label{IRSKeyEq}}
\lambda Y_i \equiv \varphi_i\mod \prod_{j=1}^n(x-\alpha_j).
\end{equation}
In order to decode, it suffices to study the set $S$ of $(\lambda, \varphi_1,\ldots, \varphi_l)\in \Fq^{(l+1)\times 1}$ which verify (\ref{IRSKeyEq}) and such that $\deg(\lambda)\leq |E|$ and $\deg(\varphi_i)\leq |E|+k-1$.
IRS codes, can be decoded by efficient BD decoders beyond the unique decoding radius. These decoders succeed \textit{for almost all} error patterns \cite{bleichenbacher_decoding_2003}. With ``almost all" we mean that there exists a polynomial $R$ such that the decoder succeeds for all instances $\Upsilon$ satisfying $R(\Upsilon) \neq 0$.
A quite tight estimation of the probability of failure can be founded in \cite{schmidt_collaborative_2009} and improvements on the decoding radius recently appeared in  \cite{puchinger_decoding_2017}.

In the next section we will remark the parallel between the problem of solving polynomial linear systems with errors and the IRS decoding. (\emph{cf.}~\cite{guerrini_polynomial_2019}).

\section{Polynomial Linear System Solving With Errors}\label{polyLinSystErrSection}
Given $l\geq 1$, we study the problem of solving a consistent full rank polynomial linear system over a finite field $\Fq$,
\begin{equation}\label{PolyLinSyst}
A(x)\y(x)=\bb(x)
\end{equation}
where  
$A(x)\in (\Fq[x])^{l\times l}$ is full rank and $\bb(x)\in (\Fq[x])^{l\times 1}$.
Any solution of this system is a vector of rational functions, \textit{i.e.} $\y(x)=\left(\frac{\hat{f}_1(x)}{g_1(x)}, \ldots, \frac{\hat{f}_l(x)}{g_l(x)}\right)\in \Fq(x)^{l\times 1}$. Let $g(x)$ be the monic least common denominator, then there is a unique solution 
\begin{equation}\label{uniqueSol}
\y(x)=\left(\frac{f_1(x)}{g(x)}, \ldots, \frac{f_l(x)}{g(x)}\right) \in \Fq(x)^{l\times 1}
\end{equation}
that is also \textit{reduced}, \textit{i.e.} $\gcd(f_1(x),\dots,f_l(x),g(x))=1$. Our main aim is to reconstruct such a solution. Note that this common denominator representation can be more compact and it appears frequently for solutions of linear systems computed by the Cramer's rule.

As in \cite{boyer_numerical_2014}, \cite{kaltofen_early_2017}, \cite{guerrini_polynomial_2019}, we will analyze a scenario where some \textit{errors} occur.
In detail, we fix $n$ evaluation points $\alphab$ and we suppose that any evaluation point is not a root\footnote{In \cite{boyer_numerical_2014} and \cite{kaltofen_early_2017}, the authors study a more general case. 
They fix $n$ distinct evaluation points, without any assumptions about the roots of $g(x)$. In our work, we need this assumption in order to prove our results.} of the polynomial $g(x)$. 
In our model, there is a black box which, for any evaluation point $\alpha_j$, gives a solution $\y_j \in (\Fq)^{l\times 1}$ of the evaluated system of linear equations\footnote{We suppose that for any evaluation points $\alpha_j$, the rank of the evaluated matrix $A(\alpha_j)$ still remains full. 
In \cite{boyer_numerical_2014} and \cite{kaltofen_early_2017} there was also studied the rank drop case.}$A(\alpha_j) \y_j=b(\alpha_j)$.
However this black box could do some errors in the computations.

\begin{definition}{(Erroneous evaluation points \cite{boyer_numerical_2014})}\label{errEvPointsDef}

An evaluation point $\alpha_j$ is erroneous iff
$
\y_j\neq \frac{\f(\alpha_j)}{g(\alpha_j)}.
$
We denote by $E:=\left\{1\leq j \leq n \mid \y_j\neq \frac{\f(\alpha_j)}{g(\alpha_j)}\right\}$ the set of positions of the erroneous evaluations.
\end{definition}
We can now formalize our problem,
\begin{definition}{(Polynomial linear system solving with errors)}\label{PLSwE}

The problem of solving a polynomial linear system  with errors (denoted PLSwE) consists in recovering the vector of rational functions (\ref{uniqueSol}), \textit{i.e.} the unique solution of a consistent, full rank polynomial linear system (\ref{PolyLinSyst}), given
\begin{itemize}
    \item $n$ evaluation points $\alphab$,
    \item $\df\geq \deg(\f)$, $\dg\geq \deg(g)$, $\dA\geq \deg(A)$, $\db\geq \deg(\bb)$,
    \item the black box output $(y_{ij})_{\substack{1\leq i \leq l\\1\leq j\leq n}}$,
    \item a bound on the number of erroneous evaluation points $\varepsilon\geq |E|$.
\end{itemize}
\end{definition}{}
\begin{remark}\label{PLSwEEquivFracDec}
We observe that if $j\in E$, then
there exists a nonzero $\e_j\in (\Fq)^{l\times 1}$ such that $\y_j= {\f(\alpha_j)}/{g(\alpha_j)}+\e_j$.

In general, for any $1\leq j\leq n$,
$\y_j={\f(\alpha_j)}/{g(\alpha_j)}+\e_j$ where $\e_j\neq \boldsymbol{0}$ iff $j\in E$.
\end{remark}

We can conclude that the PLSwE can be seen as the \textit{extension} of the problem of decoding an IRS code (see (\ref{IRSeq})) to the rational function case.
 
\section{simultaneous rational function recovery}\label{SR}
\begin{definition}{(Simultaneous rational function recovery)}\label{df:SRFR}

Fix some parameters $n, q, \df, \dg, \varepsilon, \alphab$ such that
$0\leq \df, \dg, \varepsilon < n\leq q$.
An instance of the simultaneous rational function recovery problem (shortly SRFR) is a matrix $(\y_j)_{1\leq j\leq n}=(y_{ij})_{\substack{1\leq i \leq l\\1\leq j \leq n}} \in (\Fq)^{l\times n}$ such that there exists
\begin{itemize}
    \item a reduced vector of rational functions $\frac{\f(x)}{g(x)} \in (\Fq(x))^{l\times 1}$, where $\deg(\f)\leq \df$, $\deg(g)\leq \dg$ and $\forall j, \ g(\alpha_j) \neq 0$.
    \item a matrix 
    $
    (\e_j)_{1\leq j \leq n}=
    (e_{ij})_{\substack{1\leq i\leq l\\1\leq j\leq n}}$
    such that its \emph{column support} $E:=\{1\leq j\leq n \mid \e_{j}\neq \boldsymbol{0}\}$ satisfies  $|E|\leq \varepsilon$;
\end{itemize}
which satisfy
$
\y_{j}={\f(\alpha_j)}/{g(\alpha_j)}+\e_{j}.
$
The solution of an SRFR instance is $(\f(x),g(x))$.
\end{definition}{}
This problem was introduced in \cite{pernet_high_performance_2014} and
in \cite{guerrini_polynomial_2019}. We now present a recovering algorithm in the model of IRS codes. 
In detail, let $(\y_{j})_{1\leq j \leq n}$ be an instance of the SRFR problem with parameters $ n, \df, \dg, \varepsilon$ (we will omit $q$ and $\alphab$ for simplicity).
As for IRS codes we introduce
the error locator polynomial $\Lambda=\prod_{j\in E}(x-\alpha_j)$ and the Lagrangian polynomials $(Y_i(x))_{1\leq i \leq l}$.
We observe that $\f,g, \Lambda$ still satisfy 
$
\Lambda g Y_i \equiv \Lambda f_i  \mod \prod_{j=1}^n(x-\alpha_j)
$
so, as for IRS codes, we study the \textit{key equations}
\begin{equation}{\label{OurKeyEq}}
\begin{array}{c}
 \psi Y_i \equiv \varphi_i \mod \prod_{j=1}^n(x-\alpha_j) \text{ for } 1\leq i \leq l,\\
\end{array}
\end{equation}
whose unknowns are the polynomials $\varphi_i(x)$ and $\psi(x)$ such that $\deg(\varphi_i)\leq d_f+\varepsilon$ and $\deg(\psi)\leq d_g+\varepsilon$.
We observe that this is equivalent to study the evaluated system 
\begin{equation}{\label{evKeyEq}}
[\varphi_i(\alpha_j)=y_{ij}\psi(\alpha_j) ]_{\substack{ 1\leq i \leq l\\1\leq j\leq n}} .
\end{equation}
In this case the unknowns are the $d_f+\varepsilon+1$ coefficients of any $\varphi_i(x)$ and the $d_g+\varepsilon+1$ coefficients of $\psi(x)$.

Let $S_{\y, \df+\varepsilon, \dg+\varepsilon}$ be the $\Fq$-vector space of  $(\bphi,\psi)=(\varphi_1,\ldots,\varphi_l,\psi)\in (\Fq[x])^{(l+1)\times 1}$ which verify (\ref{OurKeyEq}) and the degree constraints $deg(\bphi)\leq \df+\varepsilon$ and $deg(\psi)\leq \dg+\varepsilon$.

\begin{remark}{\label{kernelRemark}}
Since we can consider the key equations in the polynomial (\ref{OurKeyEq}) or evaluated version (\ref{evKeyEq}), studying the solution space $S_{\y, \df+\varepsilon, \dg+\varepsilon}$ is equivalent to study the right kernel of the coefficient matrix $M_{\y, \df+\varepsilon, \dg+\varepsilon}$ (see \cite{bleichenbacher_decoding_2003, brown_probabilistic_2004}) of the evaluated key Equation~\eqref{evKeyEq}.
In detail,
\begin{equation}{\label{coeffMatrix}}
\small
M_{\y,\df+\varepsilon, \dg+\varepsilon}=\begin{pmatrix}V_{\df+\varepsilon+1} &       & &-D_1V_{\dg+\varepsilon+1}\\
                                         &\ddots & &\vdots\\
                                         &  &V_{\df+\varepsilon+1} &-D_lV_{\dg+\varepsilon+1}\\
\end{pmatrix}{}
\end{equation}
where $V_t=(\alpha_j^{i-1})_{\substack{1\leq j\leq n\\1\leq i \leq t}}$ is an $n\times t$ \textit{Vandermonde} matrix
 and $D_i$ is the matrix with $y_{i1},\ldots,y_{in}$ on the diagonal.

\end{remark}{}

\begin{theorem}(\emph{cf.}~\cite{boyer_numerical_2014})\label{thKalto}
	If 
	$
	\varepsilon\leq \frac{(n-d_f-d_g-1)}{2}=:\varepsilon_{BK},
	$
	then all solutions $(\bphi,\psi)\in S_{\y, \df+\varepsilon, \dg+\varepsilon}$ lead to the same vector of rational functions $\f/g$, \textit{i.e.}
	$
	\frac{\bphi(x)}{\psi(x)}=\frac{\f(x)}{g(x)}.
	$
\end{theorem}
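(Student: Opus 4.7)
The plan is to adapt the classical key-equation uniqueness argument from Reed--Solomon decoding to the rational setting. I would argue in three movements: exhibit a canonical nonzero solution, cross-multiply an arbitrary solution against the true $(\f,g)$ to produce a low-degree polynomial with many roots, and finally rule out the degenerate case $\psi=0$.

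First, I verify that the space $S_{\y,\df+\varepsilon,\dg+\varepsilon}$ is nonempty by exhibiting the ``true'' solution $(\Lambda\f,\Lambda g)$, where $\Lambda=\prod_{j\in E}(x-\alpha_j)$ is the error locator. Since $|E|\leq\varepsilon$, the degrees are bounded by $\df+\varepsilon$ and $\dg+\varepsilon$ as required. Moreover, evaluating at every $\alpha_j$ (both in $E$ and outside $E$) shows that $\Lambda(\alpha_j)\,g(\alpha_j)\,y_{ij}=\Lambda(\alpha_j)\,f_i(\alpha_j)$ — trivially when $\alpha_j\in E$ (both sides vanish), and from the definition of $\y_j$ when $\alpha_j\notin E$ — so the key equation~\eqref{OurKeyEq} holds.

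Now let $(\bphi,\psi)\in S_{\y,\df+\varepsilon,\dg+\varepsilon}$ be any solution. For each $i$, I consider the polynomial
\begin{equation*}
h_i(x) := \psi(x)\,f_i(x) - \varphi_i(x)\,g(x).
\end{equation*}
Its degree is at most $\max(\dg+\varepsilon+\df,\,\df+\varepsilon+\dg)=\df+\dg+\varepsilon$. For every $j\notin E$ we have $y_{ij}=f_i(\alpha_j)/g(\alpha_j)$ with $g(\alpha_j)\neq 0$, and~\eqref{evKeyEq} gives $\varphi_i(\alpha_j)=y_{ij}\psi(\alpha_j)$; multiplying by $g(\alpha_j)$ yields $h_i(\alpha_j)=0$. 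So $h_i$ vanishes at the $\geq n-\varepsilon$ points $\{\alpha_j : j\notin E\}$. Under the hypothesis $2\varepsilon\leq n-\df-\dg-1$ we obtain $n-\varepsilon>\df+\dg+\varepsilon\geq \deg(h_i)$, forcing $h_i\equiv 0$ for every $i$. Hence $\varphi_i/\psi=f_i/g$ as rational functions, which is exactly the desired conclusion.

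The only remaining subtlety — and the point I would treat with care — is the case $\psi=0$. In that case each key equation~\eqref{OurKeyEq} reduces to $\varphi_i\equiv 0\pmod{\prod_{j=1}^n(x-\alpha_j)}$; the hypothesis also implies $\df+\varepsilon<n$, so the degree bound forces $\varphi_i=0$ as well. Thus $(\bphi,\psi)=\boldsymbol{0}$, which does not define a rational function and is excluded from the statement. For any nonzero $(\bphi,\psi)\in S_{\y,\df+\varepsilon,\dg+\varepsilon}$ we therefore have $\psi\neq 0$ and the identification $\bphi/\psi=\f/g$ is legitimate. The main technical check is the degree accounting that makes the ``more roots than degree'' argument go through; everything else is essentially formal.
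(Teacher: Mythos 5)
Your proof is correct. The paper does not include its own proof of this theorem --- it is cited from \cite{boyer_numerical_2014} --- but your cross-multiplication argument is the standard one, and the ``more roots than degree'' mechanism you use is exactly the technique the paper deploys in its proof of Theorem~\ref{th:early}. The degree accounting checks out: $\deg(h_i)\leq \df+\dg+\varepsilon$, $h_i$ vanishes on at least $n-\varepsilon$ points, and the hypothesis $2\varepsilon\leq n-\df-\dg-1$ gives $n-\varepsilon > \df+\dg+\varepsilon$, forcing $h_i=0$. Your handling of $\psi=0$ is also right (the constraint $\dg,\varepsilon\geq 0$ gives $\df+\varepsilon\leq n-1<n$, so the modular condition forces each $\varphi_i=0$). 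One small remark: the logical order would read more cleanly if you ruled out $\psi=0$ for nonzero solutions \emph{before} writing $\varphi_i/\psi$, but the content is all there. Note also that the paper's follow-up remark strengthens the conclusion to ``$(\bphi,\psi)$ is a polynomial multiple of $(\Lambda\f,\Lambda g)$''; your proof gets to $\psi f_i=\varphi_i g$ for all $i$, from which that stronger form follows by invoking $\gcd(f_1,\dots,f_l,g)=1$, but the theorem as stated only asks for equality of the rational functions, so you proved exactly what was required.
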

This means that below this error correction capability $\varepsilon_{BK}$, all the elements $(\bphi,\psi)\in S_{\y, \df+\varepsilon, \dg+\varepsilon}$ are polynomial multiples of the unique solution $(\Lambda\f,\Lambda g)$.
Besides, it is possible to prove that 
$
 S_{\y, \df+\varepsilon, \dg+\varepsilon}=\langle x^i\Lambda \f ,x^i\Lambda g \rangle_{0\leq i \leq d_{fgE}}   
$
where
\begin{equation}{\label{deltafgeDef}}
 d_{fgE}= \min(\df-\deg(\f), \dg-\deg(g))+\varepsilon-|E| .
\end{equation}
Hence we can uniquely reconstruct the vector of rational functions $\f/g$.
We observe that if $\df=\deg(\f)$ (or $\dg=\deg(g)$) and $|E|=\varepsilon=\varepsilon_{BK}$
 the solution space $S_{\y,\df+\varepsilon,\dg+\varepsilon}$ is a vector space of dimension $1$, spanned by  $(\Lambda \f ,\Lambda g)$.

In \cite{guerrini_polynomial_2019}, motivated by the analogy of the SRFR problem and the decoding of IRS codes, we proved that, under some assumptions on the error distribution, we can correct more than $\varepsilon_{BK}$ errors with a certain probability. 

We now set up our probabilistic model.
We focus on $(\y_{j})_{ 1\leq j \leq n }$, an instance of the SRFR  with parameters $n, \df, \dg, \varepsilon$ with \textit{random errors}.
In detail, we suppose that
$
\y_{j}={\f(\alpha_j)}/{g(\alpha_j)}+\e_{j},
$
where ${\f(x)}/{g(x)}$ is a reduced fraction with degrees bounded by $\df, \dg$ and such that $g(\alpha_j) \neq 0$.
Moreover $\e_j$ is uniformly distributed in
$(\Fq)^{l\times 1}$ if $j \in E$ and $\e_{j} = \boldsymbol{0}$ if $j\notin E$, for a fixed error position set $E$ with $|E| \leq \varepsilon$.
Under this assumption we have the following.

\begin{theorem}{(\emph{cf.}~\cite{guerrini_polynomial_2019})}
Fix $n,\df,\dg$ and $\varepsilon_{GLZ}=\frac{l(n-\df-\dg-1)}{l+1}$.
Let 
$(\y_j)_{1\leq j\leq n}$
be an instance of the SRFR with random errors and parameters $ n, \df, \dg=\deg(g), \varepsilon=|E|=\lfloor\varepsilon_{GLZ}\rfloor$. Then the corresponding solution space $S_{\y,\df+|E|,\deg(g)+|E|}$ is a vector space of dimension $1$, spanned by the solution $(\Lambda \f ,\Lambda g )$, with probability at least $1-\frac{\deg(g)+|E|}{q}$.
\end{theorem}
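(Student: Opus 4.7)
The plan is to reinterpret $S_{\y,\df+|E|,\deg(g)+|E|}$ as the right kernel of the coefficient matrix $M:=M_{\y,\df+|E|,\deg(g)+|E|}$ of Remark~\ref{kernelRemark} and to prove the one-dimensionality claim through a rank analysis combined with a Schwartz--Zippel argument on the random error entries.

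First, since $(\Lambda\f,\Lambda g)$ always lies in $S$ (a quick check of degrees and of the key equation, where the residue $\Lambda f_i - \Lambda g Y_i$ vanishes at every $\alpha_j$ because either $\Lambda(\alpha_j)=0$ when $j\in E$, or $e_{ij}=0$ when $j\notin E$), we have $\dim S\geq 1$, equivalently $\rank(M)\leq C-1$, where $C=l(\df+|E|+1)+(\deg(g)+|E|+1)$ is the column count of $M$. Establishing $\dim S=1$ therefore reduces to exhibiting a square submatrix $M''$ of $M$ of size $(C-1)\times(C-1)$ whose determinant is nonzero with probability at least $1-(\deg(g)+|E|)/q$. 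I would construct $M''$ by deleting the column of $M$ associated with the leading coefficient $x^{\deg(g)+|E|}$ of $\psi$ (which is nonzero in the canonical solution $\Lambda g$, so its deletion breaks the trivial dependency) and by dropping an appropriate set of $ln-(C-1)=(l-1)\deg(g)$ rows to square up the matrix.

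The degree analysis exploits the block structure of $M$: the $y_{ij}$-dependent entries are confined to the $\psi$-block columns and are linear in one $y_{ij}$ per row. In the expansion of $\det M''$, each permutation term therefore involves at most $\deg(g)+|E|$ factors of type $y_{ij}$ (one per retained $\psi$-column, of which only $\deg(g)+|E|$ remain after deletion). Substituting $y_{ij}=f_i(\alpha_j)/g(\alpha_j)+e_{ij}$, the determinant $\det M''$ becomes a polynomial in the random errors $\{e_{ij}:j\in E,\ 1\leq i\leq l\}$ of total degree at most $\deg(g)+|E|$. Conditional on this polynomial being not identically zero, Schwartz--Zippel yields $\Pr[\det M''=0]\leq(\deg(g)+|E|)/q$, which is precisely the claimed bound.

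The main obstacle is showing $\det M''\not\equiv 0$ as a polynomial in the $e_{ij}$. I would argue structurally: any $(\bphi,\psi)\in S$ satisfies $h_i:=g\varphi_i-f_i\psi\equiv 0\pmod{\prod_{j\notin E}(x-\alpha_j)}$, so writing $h_i=P q_i$ with $P=\prod_{j\notin E}(x-\alpha_j)$ yields the algebraic constraints $P(\alpha_j)q_i(\alpha_j)=\psi(\alpha_j)g(\alpha_j)e_{ij}$ for each $j\in E$ and each $i$. If all $h_i=0$, the coprimality hypothesis $\gcd(f_1,\dots,f_l,g)=1$ combined with the tight degree conditions $\dg=\deg(g)$ and $\varepsilon=|E|$ forces $(\bphi,\psi)$ to be a scalar multiple of $(\Lambda\f,\Lambda g)$; for any non-trivial $(h_i)_i$, exhibiting a single realization of the random errors that violates the above constraints suffices to show $\det M''\not\equiv 0$, completing the argument.
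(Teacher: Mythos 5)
Your proposal follows the same overall strategy as the paper: identify $S_{\y,\df+|E|,\deg(g)+|E|}$ with $\ker M$, reduce to a rank statement, bound the degree of a (sub)determinant in the random error entries, and invoke Schwartz--Zippel. The degree-counting step is sound (after deleting one $\psi$-block column, at most $\deg(g)+|E|$ columns carry the $y_{ij}$, so the minor has degree at most $\deg(g)+|E|$, and the affine substitution $y_{ij}=f_i(\alpha_j)/g(\alpha_j)+e_{ij}$ preserves this bound), and that matches the paper's bound $\delta_g+\xi$.

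The genuine gap is in the last paragraph, where you must show $\det M''\not\equiv 0$ as a polynomial in the $e_{ij}$. This is equivalent to exhibiting a \emph{single} error realization $\e^*$ for which $\rank M=C-1$, i.e.\ for which \emph{every} key-equation solution $(\bphi,\psi)$ in $S(\e^*)$ is a scalar multiple of $(\Lambda\f,\Lambda g)$. Your phrasing ``for any non-trivial $(h_i)_i$, exhibiting a single realization of the random errors that violates the above constraints'' has the quantifiers in the wrong order: it would only produce, for each bad $(\bphi,\psi)$, \emph{some} error vector excluding it, not one error vector that excludes all of them at once. Since $S$ depends on the chosen errors, this element-wise argument does not yield the needed witness. (Also, your count of $(l-1)\deg(g)$ superfluous rows is correct only when $\varepsilon_{GLZ}$ is an integer and $|E|=\varepsilon_{GLZ}$; with the floor it may be larger, which is harmless but should be noted.) The paper's Theorem~\ref{th:early} closes this gap constructively: it partitions $E=\sqcup_i I_i$ with $|I_i|\leq\lceil|E|/l\rceil$, sets $\e_j$ so that $\f(\alpha_j)-g(\alpha_j)\y_j$ is the $i(j)$-th unit vector, and then shows by a root count on each $\psi f_i-g\varphi_i$ (it has more than $n-\lceil|E|/l\rceil$ roots but smaller degree) that $\psi f_i=g\varphi_i$ for all $i$; coprimality, the key equation, and the degree bounds then force $(\bphi,\psi)\in\langle x^i\Lambda\f,x^i\Lambda g\rangle$. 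You would need to supply such an explicit witness, or an equivalent nonvanishing certificate, to make your argument complete.
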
{}

In this work we extend the previous result of \cite{guerrini_polynomial_2019} to the general case where we only know a bound $\dg\geq \deg(g)$
and a bound $\varepsilon$ on the number of errors $|E|$, with $\varepsilon \leq \varepsilon_{GLZ}=\frac{l(n-\df-\dg-1)}{l+1}$.

\begin{theorem}{\label{ourThm1}}
Fix $n,\df,\dg$ and $\varepsilon \leq \varepsilon_{GLZ}$ and take $\dfge$ as in (\ref{deltafgeDef}). Let 
$(\y_j)_{1\leq j \leq n}$
be an instance of SRFR with random errors and parameters $n, \df, \dg, \varepsilon$. Then with probability $\geq 1-\frac{\dg+\varepsilon}{q}$
 we get
$S_{\y, \df+\varepsilon, \dg+\varepsilon}=\langle x^i\Lambda \f ,x^i\Lambda g \rangle_{0\leq i \leq d_{fgE}}   
$.
\end{theorem}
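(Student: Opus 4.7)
The plan is to prove the set equality by establishing both inclusions. The containment $\supseteq$ is a deterministic verification; the reverse $\subseteq$ is the core of the result and is where the randomness of the errors enters, producing the announced probability $1 - (\dg + \varepsilon)/q$.

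For $\supseteq$, fix $0 \leq i \leq \dfge$. The congruence $\Lambda g\, Y_t \equiv \Lambda f_t \pmod{\prod_{j=1}^n(x-\alpha_j)}$ holds by the defining properties of $\Lambda$ and the Lagrangian polynomials (as already observed just before \eqref{OurKeyEq}), and multiplying it by $x^i$ shows that $(x^i\Lambda \f, x^i\Lambda g)$ satisfies the key equations. The required degree bounds $\deg(x^i\Lambda f_t) \leq i + |E| + \deg(\f) \leq \df + \varepsilon$ and $\deg(x^i\Lambda g) \leq i + |E| + \deg(g) \leq \dg + \varepsilon$ both follow from $i \leq \dfge$ together with \eqref{deltafgeDef}. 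Linear independence of the $\dfge + 1$ generators is immediate since $\Lambda g \not\equiv 0$.

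For $\subseteq$, take any $(\bphi, \psi) \in S_{\y, \df+\varepsilon, \dg+\varepsilon}$ and define $P_t(x) := \psi(x) f_t(x) - \varphi_t(x) g(x)$ for $1 \leq t \leq l$. The evaluated key equation \eqref{evKeyEq}, combined with Remark~\ref{PLSwEEquivFracDec}, yields $P_t(\alpha_j) = 0$ for $j \notin E$ and $P_t(\alpha_j) = -\psi(\alpha_j) g(\alpha_j) e_{tj}$ for $j \in E$. Granted that $P_t \equiv 0$ for every $t$, a short structural chase completes the proof: the equalities $\psi f_t = \varphi_t g$ combined with $\gcd(f_1,\ldots,f_l,g)=1$ force $g \mid \psi$, so $\psi = h'g$ and $\varphi_t = h' f_t$; plugging back into \eqref{evKeyEq} at $\alpha_j$ with $j \in E$, and using $g(\alpha_j) \neq 0$ together with $\e_j \neq \boldsymbol{0}$, yields $h'(\alpha_j) = 0$ for every $j \in E$, so $\Lambda \mid h'$; writing $h' = h\Lambda$, the degree constraints on $\bphi$ and $\psi$ then force $\deg h \leq \dfge$, which is exactly the required membership.

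The crux is therefore to prove, with probability at least $1 - (\dg+\varepsilon)/q$ over the random errors, that every $(\bphi,\psi) \in S_{\y, \df+\varepsilon, \dg+\varepsilon}$ indeed satisfies $P_t \equiv 0$ for all $t$. I would extend the rank argument of \cite{guerrini_polynomial_2019} formulated on the coefficient matrix $M_{\y, \df+\varepsilon, \dg+\varepsilon}$ of Remark~\ref{kernelRemark}: the $\supseteq$ inclusion already guarantees $\dim \ker M_{\y} \geq \dfge + 1$, and the goal is to show equality with the stated probability. Concretely, one identifies a distinguished minor of $M_\y$ whose non-vanishing certifies the rank bound, expresses it as a polynomial in the random entries $(e_{tj})_{j \in E,\, 1 \leq t \leq l}$ of degree at most $\dg + \varepsilon$, and applies the Schwartz--Zippel lemma. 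The main obstacle, going beyond the case $\deg(g) = \dg$ and $|E| = \varepsilon$ already treated in \cite{guerrini_polynomial_2019}, is to handle the simultaneous overestimation of $\deg(g)$ by $\dg$ and of $|E|$ by $\varepsilon$: these slacks account precisely for the $\dfge + 1$ intrinsic kernel dimension that must be carved out when selecting the submatrix, while still keeping the determinant's degree in the $e_{tj}$ under the announced bound $\dg + \varepsilon$.
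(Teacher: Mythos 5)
Your overall architecture matches the paper's: both inclusions, a structural chase for $\subseteq$, a rank argument on $M_{\y,\df+\varepsilon,\dg+\varepsilon}$, and Schwartz--Zippel on a $\rho$-minor viewed as a polynomial of degree $\leq \dg+\varepsilon$. Your $\supseteq$ verification and the subsequent $\gcd$/divisibility chase are fine. But there is a genuine gap exactly where you flag ``the main obstacle'': you assert that ``one identifies a distinguished minor of $M_\y$ whose non-vanishing certifies the rank bound'' and then applies Schwartz--Zippel, yet Schwartz--Zippel requires the minor to be a \emph{nonzero} polynomial in the random entries, and you never establish this. The whole difficulty of the theorem is concentrated there.

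The paper fills this gap with a concrete witness construction, which you do not supply. One partitions the error-position set $E = \sqcup_{i=1}^l I_i$ into $l$ blocks of size $|I_i| \leq \lceil |E|/l \rceil$, and for $j \in E$ picks the error so that $\f(\alpha_j) - g(\alpha_j)\y_j = \boldsymbol{\nu}_{i(j)}$, the standard basis vector supported on the block containing $j$. For any $(\bphi,\psi) \in S_{\y,\df+\varepsilon,\dg+\varepsilon}$ and any fixed component $i$, the polynomial $\psi f_i - \varphi_i g$ then vanishes at all $\alpha_j$ with $j \notin I_i$ (not merely $j \notin E$), giving $n - |I_i| \geq n - \lceil |E|/l\rceil$ roots. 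Under the hypothesis $\varepsilon \leq \varepsilon_{GLZ}$ this exceeds the degree bound $\max(\df + \deg g, \dg + \deg\f) + \varepsilon$, so $\psi f_i = \varphi_i g$ for every $i$, and the structural chase shows $S$ equals $\langle x^i\Lambda\f, x^i\Lambda g\rangle_{0\leq i\leq\dfge}$ \emph{for this particular instance}. That forces $\rank(M_\y) = \rho$ for that instance, hence the $\rho$-minor is a nonzero polynomial, and only then can Schwartz--Zippel be invoked. Without this partition-and-standard-basis construction (which is precisely how one absorbs the factor $l/(l+1)$ beyond $\varepsilon_{BK}$, and how the slack from overestimating $\deg g$ and $|E|$ is handled), the argument does not close. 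Note also that the paper evaluates the minor as a polynomial in the received entries $(y_{ij})$ (equivalently in the $e_{ij}$ after a shift) with the degree count coming from the fact that only the last $\dg+\varepsilon$ columns of $M_\y$ involve those variables; your degree accounting should be phrased the same way.
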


\noindent\emph{A. Simultaneous rational function recovery of a solution of a polynomial linear system with errors.}
By Remark~\ref{PLSwEEquivFracDec} we can deduce that the PLSwE coincides with SRFR with parameters $ n, \df, \dg, \varepsilon$. The matrix $(\y_{j})_{1\leq j\leq n}$, which is the black box output, is then an instance of this problem. Hence, all the results of the previous section hold.
Furthermore, since we want to reconstruct a vector of rational functions that is a solution of a polynomial linear system, it is possible to introduce a bound on the error correction capability which depends on the bounds on the degree of the polynomial matrix $A(x)$ and on $\bb(x)$ (as shown in \cite{kaltofen_early_2017}).

Let $(\y_{j})_{1\leq j\leq n}$ be an instance of PLSwE with parameters $n, \df, \dg, \dA, \db$.
Recall from the previous section that $S_{\y,\df+\varepsilon, \dg+\varepsilon}$ is the set of $(\bphi,\psi)$ which verify (\ref{OurKeyEq}) and such that $\deg(\bphi)\leq \df+\varepsilon$ and $\deg(\psi)\leq \dg+\varepsilon$.

\begin{theorem}(see~\cite{kaltofen_early_2017})\label{thCabay}
  Let $\varepsilon_{KPS} := \frac{n-\max(\dA+\df, \db+\dg)-1}{2}$.
  If $\varepsilon \leq \varepsilon_{KPS}$, then $S_{\y,\df+\varepsilon, \dg+\varepsilon}=\langle x^i\Lambda \f ,x^i\Lambda g \rangle_{0\leq i \leq d_{fgE}}$.
  
  The same result holds if we consider $\varepsilon\leq \max(\varepsilon_{BK},\varepsilon_{KPS})$.
\end{theorem}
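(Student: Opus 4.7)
The plan is to prove the nontrivial inclusion $S_{\y,\df+\varepsilon,\dg+\varepsilon}\subseteq \langle x^i\Lambda\f,x^i\Lambda g\rangle_{0\leq i\leq \dfge}$; the reverse inclusion follows from a routine verification that $(\Lambda\f,\Lambda g)$ satisfies the key equation (\ref{OurKeyEq}) and that the prescribed degree bound is exactly $\dfge$.

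The central idea is to exploit the structural identity $A\y=\bb$ to extract more information than the key equation alone provides. Given $(\bphi,\psi)\in S_{\y,\df+\varepsilon,\dg+\varepsilon}$, I introduce the polynomial vector $\Pi(x):=A(x)\bphi(x)-\bb(x)\psi(x)$. The evaluated key equation (\ref{evKeyEq}) says $\bphi(\alpha_j)=\psi(\alpha_j)\y_j$ for every $j$; combined with $A(\alpha_j)\y_j=\bb(\alpha_j)$, which holds for $j\notin E$, this gives $\Pi(\alpha_j)=\boldsymbol{0}$ at the $n-|E|$ non-erroneous positions. A degree count yields $\deg(\Pi)\leq\max(\dA+\df,\db+\dg)+\varepsilon$, and the hypothesis $\varepsilon\leq\varepsilon_{KPS}$ is exactly tuned so that this bound is strictly less than $n-|E|$ (using $|E|\leq\varepsilon$), forcing $\Pi\equiv\boldsymbol{0}$, i.e.\ $A\bphi=\bb\psi$ as polynomial vectors.

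Combining $A\bphi=\bb\psi$ with the polynomial identity $A\f=\bb g$ (obtained by clearing the denominator in $A\y=\bb$) and the invertibility of $A$ over $\Fq(x)$, I deduce $g\bphi=\f\psi$. The reduced condition $\gcd(f_1,\ldots,f_l,g)=1$ then forces $g\mid\psi$: writing $\psi=gh$ and $\bphi=\f h$ for some $h\in\Fq[x]$, I substitute back into the key equation at an erroneous position $j\in E$ to obtain $h(\alpha_j)\bigl(\f(\alpha_j)-g(\alpha_j)\y_j\bigr)=\boldsymbol{0}$. Since the parenthesis is nonzero in some coordinate (by definition of $E$ and the assumption $g(\alpha_j)\neq 0$), this forces $h(\alpha_j)=0$ for every $j\in E$, hence $\Lambda\mid h$; writing $h=p\Lambda$, the degree constraints $\deg(\bphi)\leq\df+\varepsilon$ and $\deg(\psi)\leq\dg+\varepsilon$ translate exactly into $\deg(p)\leq\dfge$. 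The extended statement about $\max(\varepsilon_{BK},\varepsilon_{KPS})$ then follows by a direct disjunction: when $\varepsilon\leq\varepsilon_{BK}$ one invokes Theorem~\ref{thKalto} together with the structural description of $S$ recalled after (\ref{deltafgeDef}), and otherwise the bound just established applies. The main obstacle is the degree-versus-zeros argument for $\Pi$, which leaves essentially no slack and is where I expect the bookkeeping to require the most care; everything downstream is formal algebraic manipulation.
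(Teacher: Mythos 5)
Your argument is correct. The paper itself does not prove Theorem~\ref{thCabay} (it cites \cite{kaltofen_early_2017}), but your strategy --- introducing $\Pi := A\bphi - \bb\psi$, showing it vanishes on the $n-|E|$ non-erroneous points while $\deg\Pi \leq \max(\dA+\df,\db+\dg)+\varepsilon \leq n-1-|E| < n-|E|$ under $\varepsilon \leq \varepsilon_{KPS}$ and $|E|\leq\varepsilon$, then passing from $\Pi\equiv\boldsymbol{0}$ and $A\f = g\bb$ via full-rankness of $A$ to $g\bphi=\f\psi$, and finally using the reduced-fraction hypothesis and the divisibility $\Lambda\mid h$ to land in $\langle x^i\Lambda\f, x^i\Lambda g\rangle_{0\le i\le \dfge}$ --- is precisely the deterministic instance of the mechanism the paper deploys in the second case of its proof of Theorem~\ref{th:early}, so you have in effect recovered the intended argument. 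The handling of the $\max(\varepsilon_{BK},\varepsilon_{KPS})$ variant by splitting on which bound applies and invoking Theorem~\ref{thKalto} plus the structural description after \eqref{deltafgeDef} in the $\varepsilon_{BK}$ branch is also the right move.
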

There are some cases in which this error correction capability is bigger than $\varepsilon_{BK}$.
In fact, as proved in \cite{kaltofen_early_2017}, when all the bounds are tight, $\varepsilon_{BK}<\varepsilon_{KPS}$ iff $\deg(g(x))>\deg(A(x))$.

In this paper, we will introduce a new bound on the error correction capability based on $\dA$ and $\db$ under probabilistic assumptions. 
In particular, given a polynomial linear system as in (\ref{PolyLinSyst}), we suppose that the black box returns $(\y_j)_{1\leq j\leq n}$ 
where $\y_j$ is uniformly distributed in $(\Fq)^{l\times 1}$ if $j\in E$ (instead of $\y_j\neq \f(\alpha_j)/g(\alpha_j)$).
By Remark~\ref{PLSwEEquivFracDec},
$
\y_j=\frac{\f(\alpha_j)}{g(\alpha_j)}+\e_j
$
and so our probabilistic assumption on the black box output is indeed an assumption on the error distribution, \textit{i.e.} $\e_j$ is uniformly distributed in $(\Fq)^{l\times 1}$ (instead of $\e_j \neq \boldsymbol{0}$), when $j\in E$.
We will call PLSwE with random errors, the PLSwE in this error model.
\begin{theorem}{\label{ourThm2}}
  Fix $n, \df, \dg ,\dA, \db$, take $\dfge$ as in (\ref{deltafgeDef}) and,
  $$\varepsilon \leq \varepsilon_{GLZ2}:= \frac{l(n-\max(\dA+\df,\db+\dg)-1)}{l+1}.$$
  Let 
  $(\y_j)_{1\leq j\leq n}$
  be an instance of PLSwE with random errors with parameters $ n, \df, \dg, \dA, \db$. Then with probability at
  least $ 1-\frac{dg+\varepsilon}{q}$ we get $S_{\y,\df+\varepsilon, \dg+\varepsilon}=\langle x^i\Lambda \f ,x^i\Lambda g \rangle_{0\leq i \leq d_{fgE}}$.
  
Thus the same result holds when $\varepsilon \leq \max(\varepsilon_{GLZ}, \varepsilon_{GLZ2})$.
\end{theorem}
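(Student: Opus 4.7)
The plan is to follow the structure of Theorem~\ref{ourThm1} while leveraging the polynomial linear system identity to replace the bound $\df+\dg$ by the (potentially smaller) quantity $\max(\dA+\df,\db+\dg)$. The starting observation is that the reduced solution $\f/g$ of $A\y=\bb$ satisfies the polynomial identity $A(x)\f(x)=g(x)\bb(x)$, and that at every non-erroneous position $j\notin E$ the black box output $\y_j$ coincides with $\f(\alpha_j)/g(\alpha_j)$, so $A(\alpha_j)\y_j=\bb(\alpha_j)$.

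The easy inclusion $\langle x^i\Lambda\f,x^i\Lambda g\rangle_{0\leq i\leq \dfge}\subseteq S_{\y,\df+\varepsilon,\dg+\varepsilon}$ is obtained by direct verification: each pair $(x^i\Lambda\f,x^i\Lambda g)$ satisfies the key equation~\eqref{OurKeyEq} (multiplying the congruence $\Lambda g Y_i\equiv \Lambda f_i\bmod \prod_j(x-\alpha_j)$ by $x^i$), and the bound on $i$ coming from~\eqref{deltafgeDef} keeps the degrees within $\df+\varepsilon$ and $\dg+\varepsilon$ respectively.

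For the reverse inclusion, take $(\bphi,\psi)\in S_{\y,\df+\varepsilon,\dg+\varepsilon}$ and form the auxiliary vector $\boldsymbol{h}:=A\bphi-\psi\bb\in(\Fq[x])^{l\times 1}$. Combining the evaluated key equations $\varphi_i(\alpha_j)=y_{ij}\psi(\alpha_j)$ with $A(\alpha_j)\y_j=\bb(\alpha_j)$ for $j\notin E$ yields $\boldsymbol{h}(\alpha_j)=\boldsymbol{0}$ for every $j\notin E$, so $\prod_{j\notin E}(x-\alpha_j)$ divides every component of $\boldsymbol{h}$. Since $\deg(\boldsymbol{h})\leq \varepsilon+\max(\dA+\df,\db+\dg)$ and the divisor has degree $n-|E|\geq n-\varepsilon$, the deterministic regime $2\varepsilon+\max(\dA+\df,\db+\dg)<n$ already forces $\boldsymbol{h}=\boldsymbol{0}$, which recovers Theorem~\ref{thCabay}. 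To push the bound up to $\varepsilon_{GLZ2}$ one trades the Bezout factor $\tfrac12$ against the collaborative factor $\tfrac{l}{l+1}$, just as \cite{guerrini_polynomial_2019} does when passing from $\varepsilon_{BK}$ to $\varepsilon_{GLZ}$. Concretely, one analyses the right kernel of the coefficient matrix $M_{\y,\df+\varepsilon,\dg+\varepsilon}$ of Remark~\ref{kernelRemark}, and shows, by invoking the main technical theorem of Section~\ref{technicalResultsSec}, that under the random-error distribution its dimension is exactly $\dfge+1$ with probability at least $1-(\dg+\varepsilon)/q$; the PLSwE identity $A\f=g\bb$ enters precisely to replace $\df+\dg$ by $\max(\dA+\df,\db+\dg)$ in the resulting dimension count.

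The main obstacle is this probabilistic rank computation: it requires tracking how the random errors interact with the Vandermonde blocks $V_{\df+\varepsilon+1},V_{\dg+\varepsilon+1}$ and the diagonal blocks $D_i$ appearing in~\eqref{coeffMatrix}, and injecting the linear system relation at the right place so that the improved degree budget $\max(\dA+\df,\db+\dg)$ propagates through every Schwartz--Zippel-type step. Everything else is bookkeeping. The final assertion involving $\max(\varepsilon_{GLZ},\varepsilon_{GLZ2})$ is then immediate: whichever of the two bounds is valid for the given parameters, one applies Theorem~\ref{ourThm1} or the current result accordingly.
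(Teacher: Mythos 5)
Your high-level plan is sound and your deterministic argument — forming $\boldsymbol{h}:=A\bphi-\psi\bb$, showing $\boldsymbol{h}(\alpha_j)=\boldsymbol{0}$ for $j\notin E$, and dividing by $\prod_{j\notin E}(x-\alpha_j)$ — correctly recovers the Kaltofen--Pernet--Storjohann bound $\varepsilon_{KPS}$ of Theorem~\ref{thCabay}. You also correctly identify that the probabilistic improvement must come from a rank argument on $M_{\y,\df+\varepsilon,\dg+\varepsilon}$ followed by Schwartz--Zippel. But the step where the factor $l/(l+1)$ actually appears is left entirely undone, and you acknowledge this yourself when you call it ``the main obstacle'' and the rest ``bookkeeping.'' That obstacle is the substance of the theorem, and without it nothing beyond $\varepsilon_{KPS}$ is established.

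Concretely, what is missing is the \emph{constructive existence} step that the paper proves in Theorem~\ref{th:early}: to apply Schwartz--Zippel to a $\rho$-minor of $M_{\y,\df+\varepsilon,\dg+\varepsilon}$ (where $\rho:=n(\df+\varepsilon+1)+\dg+\varepsilon-\dfge$), one must first exhibit a single instance $(\y_j)_j$ for which that minor is nonzero. The paper does this by partitioning $E=\sqcup_{i=1}^l I_i$ with $|I_i|\leq\lceil|E|/l\rceil$ and choosing, for $j\in E$, the column error so that $\f(\alpha_j)-g(\alpha_j)\y_j=-A(\alpha_j)^{-1}g(\alpha_j)\boldsymbol{\nu}_{i(j)}$, where $\boldsymbol{\nu}_{i(j)}$ is the $i(j)$-th canonical basis vector. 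With this choice, the $i$-th entry of $A(x)\bphi(x)-\psi(x)\bb(x)$ vanishes at all $\alpha_j$ with $j\notin I_i$ — that is $n-|I_i|\geq n-\lceil|E|/l\rceil$ points, \emph{not} the $n-|E|$ points your argument uses — and this larger vanishing budget is exactly what converts the denominator $2$ into $l/(l+1)$, giving $N_2=\max(\dA+\df,\db+\dg)+\lceil\frac{(l+1)\varepsilon}{l}\rceil+1$ and hence $\varepsilon_{GLZ2}$. Your proposal never introduces this partition or these special error vectors, so the ``trade'' you describe has no mechanism. Moreover, invoking ``the main technical theorem of Section~\ref{technicalResultsSec}'' is circular here: that theorem (Theorem~\ref{th:early}) \emph{is} the paper's proof of Theorem~\ref{ourThm2}, obtained by the specialization $\delta_f=\df$, $\delta_g=\dg$, $\xi=\varepsilon$. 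If you are allowed to cite it, the entire content of your proof should be that specialization plus the check $n\geq N_2\Leftrightarrow\varepsilon\leq\varepsilon_{GLZ2}$; if not, you must reproduce the partition-and-Schwartz--Zippel argument itself.
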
{}

We will prove Theorems~\ref{ourThm1} and~\ref{ourThm2} in Section˜ \ref{technicalResultsSec}.

\section{Parameter oblivious decoding algorithm}\label{ET}
The PLSwE problem takes as input some degree bounds $\df$ and $\dg$, hence all our error correction capabilities until now depend on these bounds, \textit{e.g.} $\varepsilon_{GLZ}=\frac{l(n-\df-\dg-1)}{l+1}$. Most importantly, our technique for solving the PLSwE requires such degree bounds to decode up to this capability.

Ideally we could decode up to 
$\frac{l(n-\deg(\f)-\deg(g)-1)}{l+1}$ errors by taking the bounds tight, \textit{i.e.} $\df = \deg(\f), \dg = \deg(g)$. Our lack of knowledge of the real degrees $\deg(\f), \deg(g)$ limits us to correct this ideal amount of errors. Indeed, the bounds $\df$ and $\dg$ could overestimate the degrees of $\f(x)$ and $g(x)$, thus significantly decreasing all our error correction capability bounds. 

In this work, we propose a \textit{parameter oblivious} (\cite{khonji_output-sensitive_2010}, \cite{pernet_high_performance_2014}) algorithm that allows to get closer to the ideal error correction capability even without the knowledge of the real degrees.

In \cite{khonji_output-sensitive_2010} the authors already observed that even for classic RS codes (Section~\ref{IRS}), the knowledge of a bound instead of the real degree of $f$, could decrease the error correction capability. They proposed an algorithm for standard RS codes that allows to correct up to $\frac{n-\deg(f)-1}{2}\geq \frac{n-k}{2}$ errors.

On the other hand, in \cite{kaltofen_early_2017} it was introduced an algorithm for solving the PLSwE up to 
$
|E|\leq \max(\varepsilon'_{BK}, \varepsilon'_{KPS})
$
where
$$
\begin{array}{l}
\varepsilon'_{BK}:=\frac{n-\max(\deg(\f)+\dg, \deg(g)+\df)-1}{2}\geq \varepsilon_{BK},\\
 \varepsilon'_{KPS}:=\frac{n-\max(\dA+\deg(\f)),\db+ \deg(g))-1}{2}\geq \varepsilon_{KPS}
\end{array}
$$

In this work, we propose an algorithm that succeeds for almost all instances
$(\y_j)_{1\leq j \leq n}$
of a PLSwE with parameters $n, \df, \dg, \dA, \db, \varepsilon$
whenever
$
|E|\leq \max(\varepsilon'_{GLZ}, \varepsilon'_{GLZ2})
$
where 
$$
\small
\begin{array}{l}
    \varepsilon'_{GLZ}:=
    n-\max(\deg(\f)+\dg, \df+\deg(g))-\lceil \frac{\varepsilon}{l} \rceil-1 \geq \varepsilon_{GLZ}\\
    \varepsilon'_{GLZ2}:=
    n-\max(\dA+\deg(\f),\db+\deg(g))-\lceil \frac{\varepsilon}{l} \rceil-1 \geq \varepsilon_{GLZ2}
\end{array}{}
$$
We will explain later where these bounds come from.
Our new capability $\varepsilon'_{GLZ}$ can be greater than $\varepsilon'_{BK}$, especially when $\varepsilon$ is a tight bound on the number of errors. 
In particular, if we assume that $|E| \leq \varepsilon'_{GLZ}$, or equivalently if $\varepsilon = \varepsilon'_{GLZ}$, then $\varepsilon'_{GLZ}$ becomes
$$\varepsilon'_{GLZ} = \frac{l(n-\max(\deg(\f)+\dg, \deg(g)+\df)-1)}{l+1} \geq \varepsilon'_{BK}.$$
The same holds for $\varepsilon'_{GLZ2}$ w.r.t. $\varepsilon'_{KPS}$.

The main idea consists in the introduction of some others parameters $\delta_f, \delta_g, \xi$ and on the study of the solution space of the key equation~\ref{OurKeyEq} with new degree constraints $\delta_f+\xi$ and $\delta_g+\xi$.
As in (\ref{deltafgeDef}) we define,
\begin{equation}{\label{eq:deltafgE}}
  \deltafgE:=\min(\delta_f-\deg(\f), \delta_g-\deg(g))+\xi-|E|.  
\end{equation}

Informally speaking, we will see in the following theorem that the introduction of these new degree constraints will allow us to increase the error correction capability.

\begin{algorithm}[tb]\label{parameterObliviousAlgo}
 \KwData{
  $(y_{ij})_{\substack{1\leq i\leq l\\1\leq j \leq n}}$, an instance of PLSwE with parameters $ n, \df, \dg, \dA, \db, \varepsilon$}

\KwResult{$(\bphi,\psi)$ (equal to $(\Lambda \f, \Lambda g)$ with high probability) or
 ``$|E|>\max(\varepsilon'_{GLZ}, \varepsilon'_{GLZ2})$"}

\bigskip
\label{l:firstaffect}$\delta_f+\xi \leftarrow n-\dg-\lceil \frac{\varepsilon}{l} \rceil-1$;\ $\delta_g+\xi \leftarrow n-\df-\lceil \frac{\varepsilon}{l} \rceil-1$\\

Let $S_{\y,\delta_f+\xi, \delta_g+\xi}$ be the solution space of the key equation~\ref{OurKeyEq} with degree constraints $\delta_f+\xi$, $\delta_g+\xi$. 

  \If{$S_{\y,\delta_f+\xi, \delta_g+\xi}\neq \{(\boldsymbol{0},0)\}$ \label{l:firstif}}
  {\Return $(\bphi, \psi)$ the non zero element of $S_{\y,\delta_f+\xi, \delta_g+\xi}$ with minimal degrees}
\label{l:secaffect}$\delta_f+\xi \leftarrow n-\dA-\lceil \frac{\varepsilon}{l} \rceil-1$; \ $\delta_g+\xi \leftarrow n-\db-\lceil \frac{\varepsilon}{l} \rceil-1$\\
    \If{$S_{\y,\delta_f+\xi, \delta_g+\xi}\neq \{(\boldsymbol{0},0)\}$ \label{l:secif}}
    {\Return $(\bphi, \psi)$ the non zero element of $S_{\y,\delta_f+\xi, \delta_g+\xi}$ with minimal degrees}
    \Return ``$|E|>\max(\varepsilon'_{GLZ}, \varepsilon'_{GLZ2})$";
  
  \caption{Parameter Oblivious Algorithm}\label{parameterObliv}
\end{algorithm}

\begin{theorem}(Parameter oblivious algorithm)
  \label{th:poa}
 
 If $|E| \leq \max(\varepsilon'_{GLZ}, \varepsilon'_{GLZ2}) $ then Algorithm~{\bf \ref{parameterObliv} } outputs $(\bphi,\psi)$ for all instances $(\y_j)_{1\leq j\leq n}$ of the PLSwE. Moreover  $(\bphi,\psi)=(\Lambda \f, \Lambda g)$ with probability $\geq 1-\frac{2(\dg+\varepsilon)}{q}$.

If $|E| > \max(\varepsilon'_{GLZ}, \varepsilon'_{GLZ2}) $, then  Algorithm~{\bf \ref{parameterObliv} } returns  ``$|E|>\max(\varepsilon'_{GLZ}, \varepsilon'_{GLZ2})$" with probability $\geq 1-\frac{2(\dg+\varepsilon)}{q}$.
\end{theorem}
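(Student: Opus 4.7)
The plan is a case analysis on the true number of errors $|E|$, combined with the structural theorem (the generalization of Theorem~\ref{ourThm2} promised in Section~\ref{technicalResultsSec}) applied to each of the two parameter choices of Algorithm~\ref{parameterObliv}. The pivotal observation is that, substituting line~\ref{l:firstaffect} into~\eqref{eq:deltafgE}, one obtains
\[
\deltafgE^{(1)} = \min\bigl((\delta_f+\xi)-\deg(\f),(\delta_g+\xi)-\deg(g)\bigr) - |E| = \varepsilon'_{GLZ} - |E|,
\]
and, analogously, line~\ref{l:secaffect} gives $\deltafgE^{(2)} = \varepsilon'_{GLZ2} - |E|$. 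The structural theorem then asserts that, for each phase $i\in\{1,2\}$, with probability at least $1-\frac{\dg+\varepsilon}{q}$ the corresponding solution space equals $\langle x^j\Lambda\f,\,x^j\Lambda g\rangle_{0\le j\le\deltafgE^{(i)}}$, with the convention that the span is $\{(\boldsymbol{0},0)\}$ whenever $\deltafgE^{(i)}<0$.

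From this I would derive the theorem in three regimes. If $|E|\le\varepsilon'_{GLZ}$, then $\deltafgE^{(1)}\ge 0$, so with high probability Phase~1's space is nonzero and its minimum-degree element (attained at $j=0$) is precisely $(\Lambda\f,\Lambda g)$; the algorithm therefore terminates correctly on line~\ref{l:firstif}. If $\varepsilon'_{GLZ}<|E|\le\varepsilon'_{GLZ2}$, then $\deltafgE^{(1)}<0\le\deltafgE^{(2)}$, so with high probability Phase~1's space is trivial, the algorithm moves past line~\ref{l:firstif}, and the same argument applied to Phase~2 returns $(\Lambda\f,\Lambda g)$ on line~\ref{l:secif}. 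If $|E|>\max(\varepsilon'_{GLZ},\varepsilon'_{GLZ2})$, then both $\deltafgE^{(i)}<0$, so with high probability both spaces are trivial and the algorithm reaches the final statement with the error message. Summing the failure events of the two phases by a union bound yields the announced success probability $1-\frac{2(\dg+\varepsilon)}{q}$.

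The main obstacle is the ``empty'' direction of the structural theorem under the algorithm's non-standard parameterization: when $\deltafgE^{(i)}<0$ one must rule out, with high probability, any spurious low-degree solution of the key equation~\eqref{OurKeyEq} whose degree constraints are expressed via $\dg,\df$ (Phase~1) or $\dA,\db$ (Phase~2) rather than via $\df,\dg$ as in Theorems~\ref{ourThm1}--\ref{ourThm2}. This is exactly what the general version of the main technical theorem of Section~\ref{technicalResultsSec} is designed to provide, and verifying that its hypotheses are met for both phases is the genuine content of the argument. A subsidiary bookkeeping step is to confirm that each phase's individual failure probability is indeed controlled by $\frac{\dg+\varepsilon}{q}$, which with the union bound accounts for the factor $2$ in the final probability.
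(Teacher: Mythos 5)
Your proposal follows essentially the same route as the paper's proof: you invoke the general structural theorem (Theorem~\ref{th:early}) for each of the two parameter affectations, observe that $\deltafgE^{(1)}=\varepsilon'_{GLZ}-|E|$ and $\deltafgE^{(2)}=\varepsilon'_{GLZ2}-|E|$ so that positivity of $\deltafgE^{(i)}$ governs whether the corresponding span is nontrivial, and combine the two applications via a union bound to get the factor $2$. The three-regime case split you give is a slightly cleaner organization of the same argument than the paper's, which lumps the two successful regimes together.

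Two details you flag as ``bookkeeping'' are indeed the substantive steps and should be carried out explicitly. First, Theorem~\ref{th:early} requires $n\geq\min(N_1,N_2)$; for the first affectation $\delta_f+\xi = n-\dg-\lceil\varepsilon/l\rceil-1$, $\delta_g+\xi = n-\df-\lceil\varepsilon/l\rceil-1$ one checks directly that $N_1 = \max\bigl((\delta_f+\xi)+\deg(g),(\delta_g+\xi)+\deg(\f)\bigr)+\lceil|E|/l\rceil+1 \leq n-\lceil\varepsilon/l\rceil+\lceil|E|/l\rceil\leq n$ since $\deg(\f)\leq\df$, $\deg(g)\leq\dg$, $|E|\leq\varepsilon$; the analogous computation gives $n\geq N_2$ for the second affectation. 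Second, Theorem~\ref{th:early} yields failure probability $\frac{\delta_g+\xi}{q}$, not $\frac{\dg+\varepsilon}{q}$, and the quantity $\delta_g+\xi = n-\df-\lceil\varepsilon/l\rceil-1$ (resp.\ $n-\db-\lceil\varepsilon/l\rceil-1$) need not be bounded by $\dg+\varepsilon$ without further constraint on $n$; this is a small imprecision present in the paper's own statement as well, so you are in good company, but your assertion that ``each phase's individual failure probability is indeed controlled by $\frac{\dg+\varepsilon}{q}$'' requires either a hypothesis relating $n$ to the bounds or a restatement of the theorem in terms of $\delta_g+\xi$.
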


The fact that the algorithm can (probabilistically) detect if $|E|$ exceeds the error correction capability could be used inside another algorithm that would dynamically increase the redundancy $n$ by requesting evaluation on new points, (\textit{cf.} \cite{kaltofen_early_2017}, Algorithm 4.1). 

\begin{remark}
In order to compute the nonzero minimal degree solution (\textit{e.g.} in line 7 of Algorithm~\ref{parameterObliv}), we can use two different approaches: \cite{kaltofen_early_2017} uses column echelon form of the basis of the $\ker(M_{\y,\delta_f+\xi, \delta_g+\xi})$ whereas \cite{rosenkilde-algorithms-sim_2016} proposes $\Fq[x]$-module techniques.
The latter approach yields the best complexity, \textit{i.e.}   $O^{\sim}(l^{\omega-1} n)$ arithmetic operations in $\Fq$ where $\omega < 2.38$ is the linear algebra exponent.
\end{remark}{}

\section{Technical results}\label{technicalResultsSec}

\begin{theorem}\label{th:early}
	Fix $\delta_f, \delta_g, \xi \geq 0$ and let $\deltafgE$ as in (\ref{eq:deltafgE}).
	
Let 
$(\y_j)_{1\leq j\leq n}$
be an instance of the PLSwE with parameters $ n, \deg(\f), \deg(g), |E|, \deg(A), \deg(\bb)$, where $n\geq \min(N_1,N_2)$ and 

\begin{itemize}
	\item $N_1:=\max(\delta_f+\deg(g), \delta_g+\deg(\f))+\xi+\left\lceil{|E|}/{l}\right\rceil+1$,
	\item $N_2:=\max(\delta_f+\deg(A), \delta_g+\deg(\bb))+\xi+\left\lceil{|E|}/{l}\right\rceil+1$,
\end{itemize}{}

Then, with probability at least $1-\frac{\delta_g+\xi}{q}$ we have that
\begin{equation}{\label{solSpace}}
S_{\y,\delta_f+\xi,\delta_g+\xi} =
\langle x^i\Lambda\f,x^i\Lambda g \rangle_{0\leq i \leq \delta_{fgE}}
\end{equation}

By convention, if $\delta_{fgE}<0$, we set $\langle x^i\Lambda\f,x^i\Lambda g \rangle_{0\leq i \leq \delta_{fgE}} =\{(\boldsymbol{0},0)\}$.
\end{theorem}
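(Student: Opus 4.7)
The plan is to prove both inclusions of~(\ref{solSpace}). For $\supseteq$, each candidate generator $(x^i\Lambda\f,x^i\Lambda g)$ with $0\le i\le\deltafgE$ satisfies the key equation~(\ref{OurKeyEq}) componentwise: for each $1\le k\le l$, the residual $x^i\Lambda\cdot(gY_k-f_k)$ vanishes modulo $\prod_j(x-\alpha_j)$, because $gY_k-f_k$ vanishes on $\{\alpha_j:j\notin E\}$ while $\Lambda$ vanishes on $\{\alpha_j:j\in E\}$. The restriction $i\le\deltafgE$ is precisely what definition~(\ref{eq:deltafgE}) guarantees so that $\deg(x^i\Lambda f_k)\le\delta_f+\xi$ and $\deg(x^i\Lambda g)\le\delta_g+\xi$; if $\deltafgE<0$ the span is empty by convention.

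For $\subseteq$, fix $(\bphi,\psi)\in S_{\y,\delta_f+\xi,\delta_g+\xi}$ and introduce the two auxiliary polynomial vectors $\boldsymbol{h}:=\psi\f-g\bphi$ and $\boldsymbol{r}:=A\bphi-\psi\bb$. Since $A\f=g\bb$ (i.e.\ $\f/g$ solves~(\ref{PolyLinSyst})), a direct expansion yields $A\boldsymbol{h}=-g\boldsymbol{r}$; as $A$ is invertible over $\Fq(x)$, we have $\boldsymbol{h}=\boldsymbol{0}\iff\boldsymbol{r}=\boldsymbol{0}$. For each $j\notin E$, combining the key equation with $\y_j=\f(\alpha_j)/g(\alpha_j)$ yields $\boldsymbol{h}(\alpha_j)=\boldsymbol{0}$, while $A(\alpha_j)\y_j=\bb(\alpha_j)$ yields $\boldsymbol{r}(\alpha_j)=\boldsymbol{0}$. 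Hence $P(x):=\prod_{j\notin E}(x-\alpha_j)$ divides every component of both vectors. The hypothesis $n\ge N_1$ forces $\deg h_k<n-\lceil |E|/l\rceil$, and analogously $n\ge N_2$ forces $\deg r_k<n-\lceil |E|/l\rceil$, so in either regime each component of the quotient $\boldsymbol{h}/P$ (or $\boldsymbol{r}/P$) has degree strictly less than $|E|-\lceil |E|/l\rceil$.

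The main obstacle is the probabilistic step: to conclude $\boldsymbol{h}=\boldsymbol{0}$ (equivalently $\boldsymbol{r}=\boldsymbol{0}$) with probability at least $1-(\delta_g+\xi)/q$. By Remark~\ref{kernelRemark}, this is equivalent to showing $\dim\ker M_{\y,\delta_f+\xi,\delta_g+\xi}=\deltafgE+1$, i.e.\ that the coefficient matrix attains its expected rank $l(\delta_f+\xi+1)+(\delta_g+\xi)-\deltafgE$. The plan is, in each of the two regimes, to exhibit an explicit square submatrix of $M_{\y,\delta_f+\xi,\delta_g+\xi}$ of that size whose determinant, regarded as a polynomial in the uniformly random error coordinates $(e_{kj})_{j\in E}$, is a nonzero polynomial of total degree at most $\delta_g+\xi$; Schwartz--Zippel then yields the claimed probability bound. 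The nonvanishing is verified by specializing to the noise-free case and invoking Theorem~\ref{thKalto} (or Theorem~\ref{thCabay}), which guarantee the expected rank deterministically; the degree control $\delta_g+\xi$ is inherited from the $\psi$-columns of $M$ in~(\ref{coeffMatrix}).

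Once $\boldsymbol{h}=\boldsymbol{0}$, the identity $\psi f_k=g\varphi_k$ combined with $\gcd(f_1,\dots,f_l,g)=1$ forces $\psi=qg$ and $\varphi_k=qf_k$ for a unique $q\in\Fq[x]$. Substituting back into~(\ref{OurKeyEq}) at each erroneous $\alpha_j$, where $g(\alpha_j)\ne 0$ and at least one $y_{kj}\ne f_k(\alpha_j)/g(\alpha_j)$, forces $q(\alpha_j)=0$, so $\Lambda\mid q$. Writing $q=q'\Lambda$, the degree bounds on $\bphi$ and $\psi$ translate into $\deg q'\le\deltafgE$, placing $(\bphi,\psi)=q'\cdot(\Lambda\f,\Lambda g)$ in the claimed span; if $\deltafgE<0$ this forces $q'=0$ and $(\bphi,\psi)=(\boldsymbol{0},0)$.
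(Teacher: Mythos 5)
Your $\supseteq$ inclusion and the closing deduction (from $\psi\f = g\bphi$ to $(\bphi,\psi)=q'(\Lambda\f,\Lambda g)$ with $\deg q'\le\deltafgE$) are both correct and match the corresponding steps in the paper. The genuine gap is in the probabilistic step: you propose to prove nonvanishing of a $\rho\times\rho$ minor of $M_{\y,\delta_f+\xi,\delta_g+\xi}$, where $\rho=l(\delta_f+\xi+1)+\delta_g+\xi-\deltafgE$, by ``specializing to the noise-free case.'' But that specialization makes the minor \emph{vanish}, not the opposite. If you set $\e_j=\boldsymbol{0}$ for all $j\in E$, then $(\f,g)$ together with all shifts $x^i(\f,g)$ for $0\le i\le\min(\delta_f-\deg\f,\delta_g-\deg g)+\xi$ lie in the kernel, so $\dim\ker M \geq \deltafgE+|E|+1$ and hence $\rank M\le\rho-|E|<\rho$ whenever $|E|>0$. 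Every $\rho$-minor therefore evaluates to zero at the noise-free point, which is exactly what you must avoid. (Relatedly, Theorems~\ref{thKalto} and~\ref{thCabay} applied to the $|E|=0$ instance describe the solution space in terms of the \emph{new} $\delta_{fgE}$ with $|E|=0$, not the original one, so they cannot certify $\rank M=\rho$.)

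What is actually needed --- and what the paper supplies --- is a carefully constructed \emph{nonzero} error pattern that witnesses $\rank M=\rho$. The paper partitions $E$ into $l$ blocks $I_1\sqcup\cdots\sqcup I_l$ with $|I_i|\le\lceil|E|/l\rceil$ and, for each $j\in E$, chooses $\e_j$ to be supported on a \emph{single} coordinate $i(j)$ (in the $N_1$ regime: $\f(\alpha_j)-g(\alpha_j)\y_j=\boldsymbol{\nu}_{i(j)}$; in the $N_2$ regime: an analogue built from $A$ and $\bb$). For that constructed instance, each component polynomial $\psi f_i-g\varphi_i$ (resp.\ $(A\bphi-\psi\bb)_i$) vanishes at the $n-|I_i|\geq n-\lceil|E|/l\rceil$ points with $j\notin I_i$, which \emph{does} exceed its degree bound thanks to $n\ge\min(N_1,N_2)$. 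That degree-vs-roots count is precisely what your deterministic attempt was missing (you only had $n-|E|$ many roots, not enough), and the block structure is how the $\lceil|E|/l\rceil$ term in $N_1,N_2$ earns its keep. Once this single witnessing instance is in hand, the nonvanishing of the $\rho$-minor as a polynomial in $(y_{ij})_{j\in E}$ follows, and Schwartz--Zippel with degree bound $\delta_g+\xi$ (only the $\psi$-block of columns contains the $y_{ij}$) gives the claimed probability. Your identity $A\boldsymbol h=-g\boldsymbol r$ is a nice unification of the two regimes, but it does not substitute for the explicit construction of a full-rank instance.
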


\begin{proof}
We start by proving that there exists a PLSwE instance $(\y_j)_{1\leq j\leq}$ with the same parameters such that Eq.~\eqref{solSpace}.
 We take a partition $E = \sqcup_{i=1}^l I_i$ with the constraint $|I_i| \leq \lceil |E|/l \rceil$ (it exists since $l \lceil |E|/l \rceil \geq |E|$). For $j \in E$, we define $1 \leq i(j) \leq l$ as the unique index such that 
 $j \in I_{i(j)}$.
 
We separate two cases to prove that $\bphi(x)g(x)=\f(x)\psi(x)$. First if $\min(N_1,N_2)=N_1$,
     then for all $j \in E$ we choose $\y_j\in (\Fq)^{l \times 1}$ such that 
$
    \f(\alpha_j) - g(\alpha_j) \y_j = \boldsymbol{\nu}_{i(j)},
$
 where $\boldsymbol{\nu}_{i(j)}\in (\Fq)^{l \times 1}$ is a vector whose $i(j)$-entry is $1$ and all the others are zero.
 We multiply by $\psi(\alpha_j)$ and we get
$
    \psi(\alpha_j) \f(\alpha_j) - g(\alpha_j) \psi(\alpha_j) \y_j = \psi(\alpha_j) \boldsymbol{\nu}_{i(j)}.
$ By key Equation~\eqref{evKeyEq} we can replace $\psi(\alpha_j) \y_j$ by $\bphi(\alpha_j)$.
 Fix $i$, then
 $
 \forall j \notin I_i, \ 
 \psi(\alpha_j) f_i(\alpha_j) - g(\alpha_j) \varphi_i(\alpha_j)  = 0.
 $
The number of roots of the polynomial  $\psi(x) f_i(x) - g(x) \varphi_i(x)$ is then
 $n-|I_i| \geq n - \lceil |E|/l \rceil \geq \max(\delta_g + \deg (\f), \delta_f + \deg(g)) +
 \xi +1$. Hence, since this polynomial has more roots than its degree it is the zero polynomial.
 
 Second, if $min(N_1,N_2)=N_2$,
 then for all $j \in E$ we choose $\y_j$ such that
 $\f(\alpha_j) - g(\alpha_j) \y_j =-A(\alpha_j)^{-1}g(\alpha_j) \boldsymbol{\nu}_{i(j)}$
 or equivalently
 $A(\alpha_j)\y_j-A(\alpha_j){\f(\alpha_j)}/{g(\alpha_j)}=\boldsymbol{\nu}_{i(j)}$.
 Since $A(\alpha_j)\frac{\f(\alpha_j)}{g(\alpha_j)}=b(\alpha_j)$, after multiplying by $\psi(\alpha_j)$ and using the key Equation~\eqref{evKeyEq} we get
 $A(\alpha_j)\bphi(\alpha_j)-\psi(\alpha_j)b(\alpha_j)=\psi(\alpha_j)\boldsymbol{\nu}_{i(j)}$.
 Fix $i$, then $\forall j \notin I_i, \ (A(\alpha_j)\bphi(\alpha_j)-\psi(\alpha_j)\bb(\alpha_j))_i=0$, \textit{i.e.} the $i$-th component of the polynomial vector $A(x)\bphi(x)-\psi(x)\bb(x)$ vanishes on those $\alpha_j$.
 As before the number $n-|I_i|\geq n-\lceil |E|/l\rceil\geq \max(\delta_f+\deg(A), \delta_g+\deg(A))+\xi+1$ roots of the polynomial $(A(x)\bphi(x)-\psi(x)\bb(x))_i$, is greater than its degree and so it is the zero polynomial.
 We have that $A(x)\f(x)-g(x)\bb(x)=0$ and $A(x)\bphi(x)-\psi(x)\bb(x)=0$. So if we multiply the first equation by $\psi(x)$, the second by $g(x)$ and we subtract we get $A(x)[\bphi(x)g(x)-\f(x)\psi(x)]=0$. Now, $A(x)$ is full rank and so $\bphi(x)g(x)-\f(x)\psi(x)=0$.

Hence in both cases, since $\f/g$ is a reduced fraction, there exists $R \in \Fq[x]$ s.t.
 $\bphi = R \f$ and $\psi = R g$. Going back to Eq.~\eqref{evKeyEq},
 for all $j \in E$ and $i=i(j)$, we get
 $ 0 = \varphi_i(\alpha_j) - \psi(\alpha_j) y_{ij}
      = R(\alpha_j) \left( f_i(\alpha_j) - g(\alpha_j) y_{ij} \right) = R(\alpha_j)$.
Therefore $\Lambda(x)$ divides $R(x)$ and
$(\bphi,\psi) \in \langle x^i \Lambda \f, x^i \Lambda g \rangle$. The
power $i$ must verify
$i + |E| + \deg(\f) = \deg(x^i \Lambda \f) \leq \delta_f + \xi$ and
the same for $g$ which implies exactly that  $i \leq \delta_{fgE}$.
      
Let's now prove that Eq.~\eqref{solSpace} holds with high probability.
We always have
$\langle x^i\Lambda\f, x^i\Lambda g \rangle  \subseteq \ker(M_{\y, \delta_f+\xi, \delta_g+\xi}) = S_{\y,\delta_f+\xi,\delta_g+\xi}$ and Eq.~\eqref{solSpace} is the equality case.
By the rank--nullity theorem, we always have $
\rank(M_{\y,\delta_f+\xi, \delta_g+\xi})\leq \rho
$
where $\rho := n(\delta_f+\xi+1)+\delta_g+\xi-\delta_{fgE}$ and Eq.~\eqref{solSpace} is equivalent to $\rank(M_{\y,\delta_f+\xi, \delta_g+\xi}) = \rho$.
In the first part of the proof, we have proved that there exists an instance $(\y_j)_{1\leq j \leq n}$ such that $\rank(M_{\y, \delta_f+\xi, \delta_g+\xi})=\rho$, which means that there exists a nonzero $\rho$-minor.
If we consider this $\rho$-minor as a polynomial in the variables $(y_{ij})$, we have shown that it is non zero. 
Note that it has total degree at most $\delta_g+\xi$ because only the last $\delta_g+\xi$ columns of $M_{\y,\delta_f+\xi, \delta_g+\xi}$ contain variables $(y_{ij})$ (see Eq.~\eqref{coeffMatrix}).
Therefore the Schwartz-Zippel lemma implies that it cannot be zero in more than $\frac{\delta_g+\xi}{q}$ fraction of its domain. For those instances $(\y_j)_{1\leq j \leq n}$ that don't cancel this $\rho$-minor, we get that $\rank(M_{\y,\delta_f+\xi, \delta_g+\xi})$ is equal to $\rho$ and Eq.~\eqref{solSpace} holds.
\end{proof}

\begin{remark}
Let $(\y_{j})_{1\leq j\leq n}$ be an instance of the PLSwE with parameters $n,\df,\dg,\varepsilon,\dA,\db$. If we consider $\delta_f=\df$, $\delta_g=\dg$ and $\xi=\varepsilon$, then $N_1= \df+\dg+\lceil\frac{(l+1)}{l}\varepsilon \rceil+1$ and $N_2=\max(\dA+\df,\db+\dg)+\lceil\frac{(l+1)}{l}\varepsilon\rceil+1$.
Hence, if $n\geq \min(N_1,N_2)$, by Theorem~\ref{th:early}, the solution space $S_{\y,\df+\varepsilon,\dg+\varepsilon}=\langle x^i \Lambda \f, x^i \Lambda g\rangle$ for $0\leq i \leq \deltafgE$ with probability at least $1- \frac{\dg+\varepsilon}{q}$.

Equivalently, we can fix the number of evaluation points and let the error correction capability vary. 
Thus, let
$(\y_j)_{1\leq j\leq n}$ be an instance
of a PLSwE with parameters $n,\df, \dg,\dA, \db, \varepsilon \leq \max(\varepsilon_{GLZ}, \varepsilon_{GLZ2})$
where
\begin{itemize}
    \item $\varepsilon_{GLZ}=\frac{l(n-\df-\dg-1)}{l+1}$,
    \item $\varepsilon_{GLZ2}=\frac{l(n-\max(\dA+\df, \db+\dg)-1)}{l+1}$.
\end{itemize}
Then, by Theorem~\ref{th:early}, the solution space $S_{\y,\df+\varepsilon,\dg+\varepsilon}=\langle x^i \Lambda \f, x^i \Lambda g\rangle$ for $0\leq i \leq \deltafgE$ with probability at least $1- \frac{\dg+\varepsilon}{q}$. 
Hence we have proved the Theorem~\ref{ourThm1} and Theorem~\ref{ourThm2}.

\end{remark}{}

\begin{proof}[Proof of Theorem~\ref{th:poa}]

First we prove that $|E|\leq \max(\varepsilon'_{GLZ},\varepsilon'_{GLZ2})$ iff there exists a choice of parameters in lines~\ref{l:firstaffect} and ~\ref{l:secaffect} such that $\langle x^i\Lambda \f,x^i\Lambda g\rangle_{0\leq i \leq \deltafgE}\neq \{(\boldsymbol{0}, 0)\}$. We observe that $\langle x^i\Lambda \f,x^i\Lambda g\rangle_{0\leq i \leq \deltafgE}\neq \{(\boldsymbol{0}, 0)\}$ is equivalent to $\deltafgE \geq 0$.
We suppose that $\varepsilon\leq \varepsilon'_{GLZ}$ (we can do the same in the other case) and consider the first choice of $\delta_f+\xi$ and $\delta_g+\xi$ as in line~\ref{l:firstaffect}. Hence, $\varepsilon\leq \varepsilon'_{GLZ}$ iff  $\max(\deg(\f)+\dg, \deg(g)+\df)\leq n-|E|-\lceil \frac{\varepsilon}{l} \rceil-1$. So, $\varepsilon\leq \varepsilon'_{GLZ}$
$$
\Leftrightarrow \begin{cases}
\deg(\Lambda \f) = \deg(\f)+|E| \leq n-\dg-\lceil \frac{\varepsilon}{l} \rceil-1 = \delta_f+\xi\\
\deg(\Lambda g) = \deg(g) +|E| \leq n-\df-\lceil \frac{\varepsilon}{l} \rceil-1 = \delta_g+\xi
\end{cases}
$$
Hence, this is equivalent to $\deltafgE\geq 0$.

Now if $|E|\leq \max(\varepsilon'_{GLZ},\varepsilon'_{GLZ2})$, the latter claim implies 
$\{(\boldsymbol{0}, 0)\} \neq \langle x^i\Lambda \f,x^i\Lambda g\rangle_{0\leq i \leq \deltafgE}\subseteq S_{\y,\delta_f+\xi, \delta_g+\xi}$
and Algorithm~{\bf \ref{parameterObliv}} always outputs $(\bphi, \psi)$.

Second, we claim that for both choices of parameters $\delta_f+\xi,\delta_g+\xi$ (lines~\ref{l:firstaffect} and \ref{l:secaffect}) we have $n\geq \min(N_1,N_2)$.
In fact, if $\delta_f+\xi = n-\dg-\lceil \frac{\varepsilon}{l} \rceil-1,\delta_g+\xi= n-\df-\lceil \frac{\varepsilon}{l} \rceil-1$ then $n\geq N_1 \geq min(N_1,N_2)$. The same holds for the other affectation.
The probability that both solution spaces $S_{\y,\delta_f+\xi, \delta_g+\xi}$ of lines~\ref{l:firstif},~\ref{l:secif} are equal to $\langle x^i\Lambda \f,x^i\Lambda g\rangle_{0\leq i \leq \deltafgE}$ is at least $1-\frac{2(\dg+\varepsilon)}{q}$
by applying Theorem~\ref{th:early} on two different affectations.

Therefore, we can conclude that if $|E|\leq \max(\varepsilon'_{GLZ},\varepsilon'_{GLZ2})$, then with probability at least $1-\frac{2(\delta_g+\xi)}{q}$, $S_{\y,\delta_f+\xi, \delta_g+\xi}$ is equal to $\langle x^i\Lambda \f,x^i\Lambda g\rangle_{0\leq i \leq \deltafgE}$ and since $\deltafgE\geq 0$, the minimal non zero element is $(\varphi, \psi)=(\Lambda \f, \Lambda g)$.

On the other hand, if $|E|>\max(\varepsilon'_{GLZ},\varepsilon'_{GLZ2})$, then $\deltafgE<0$ for both affectations, and so $\langle x^i\Lambda \f,x^i\Lambda g\rangle_{0\leq i \leq \deltafgE}=\{(\boldsymbol{0}, 0)\}$. But with probability at least $1-\frac{2(\delta_g+\xi)}{q}$, both solution spaces $S_{\y,\delta_f+\xi, \delta_g+\xi}$ are equal to $\langle x^i\Lambda \f,x^i\Lambda g\rangle_{0\leq i \leq \deltafgE}=\{(\boldsymbol{0}, 0)\}$ so the algorithm will output ``$|E|>\max(\varepsilon'_{GLZ}, \varepsilon'_{GLZ2})$"
\end{proof}{}

\section{Conclusion and Future Work}\label{ConclusionSec}
In this work, we improve the result of \cite{guerrini_polynomial_2019} considering new bounds on the parameters and taking into account the degrees of $A$ and $\bb$ (as in (\ref{PolyLinSyst})). We also present a parameter oblivious algorithm that allows us to correct more errors.
Our algorithm is probabilistic and the failure probability depends on the parameters $\delta_g$ and $\xi$. Remark that our bound on the failure probability is similar to the original result of\cite{bleichenbacher_decoding_2003} for IRS codes. Actually, this bound on the decoding failure of IRS codes was strongly improved in \cite{schmidt_collaborative_2009}.
Since the SRFR coincides with the reconstruction of a vector of rational functions by its evaluations, some of which erroneous, we can see the SRFR as the decoding of an \textit{interleaved rational code} \cite{pernet_high_performance_2014}.
Despite the similarity of this problem with the decoding of IRS codes, here we deal with a code which is not linear. This prevent the adaptation of most recent techniques for bounding the probability failure of IRS decoding algorithms.
A future work is to provide a better comprehension of the interleaved rational code in order to better bound the failure probability.

\bibliographystyle{IEEEtran}
\bibliography{IEEEabrv,bibliography.bib}

\end{document}